\documentclass[conference]{IEEEtran}
\IEEEoverridecommandlockouts

\usepackage{cite}
\usepackage{amsmath,amssymb,amsfonts}
\usepackage{algorithmic}
\usepackage{graphicx}
\usepackage{textcomp}
\usepackage{xcolor}
\usepackage{algorithm}
\usepackage{array}
\usepackage[caption=false,font=normalsize,labelfont=sf,textfont=sf]{subfig}
\usepackage{textcomp}
\usepackage{stfloats}
\usepackage{url}
\usepackage{verbatim}
\usepackage{cite}
\usepackage{amssymb}
\usepackage{bm}
\usepackage{cuted}
\usepackage{placeins}

\usepackage[utf8]{inputenc}
\usepackage{lineno}
\usepackage{cases}
\usepackage{float}
\usepackage{bbold}
\usepackage{siunitx}

\usepackage{amsthm}
\newtheorem{proposition}{Proposition}

\def\BibTeX{{\rm B\kern-.05em{\sc i\kern-.025em b}\kern-.08em
    T\kern-.1667em\lower.7ex\hbox{E}\kern-.125emX}}
\begin{document}

\title{Joint Precoding and Phase-Shift Optimization for Beyond-Diagonal RIS-Aided ISAC System\\
\thanks{\scriptsize This work was supported by the National Natural Science Foundation of China (Grant No. 62071275). 
	Corresponding author: Ju Liu (juliu@sdu.edu.cn).}
	\thanks{$\dag$ Equal contribution.}
}

\author{\IEEEauthorblockN{
		Xuejun Cheng$^{\S\dag}$,
		Qian Zhang$^{\S\dag}$, 
		Yuhui Jiao$^\S$, 
		Shiyao Guo$^\S$,
		Xiaotong Xu$^\S$,
		Guanghui Luo$^\ddagger$,
		Ju Liu$^{\S\ast}$}
\IEEEauthorblockA{$^\S$School of Information Science and Engineering, Shandong University, Qingdao, China \\
$^\ddagger$National Mobile Communications Research Laboratory, Southeast University, Nanjing, China\\
Emails: \{chengxuejun, qianzhang2021, yuhuijiao2024, shiyaoguo, 202332705\}@mail.sdu.edu.cn,\\ guanghuiluo@seu.edu.cn, juliu@sdu.edu.cn}}

\maketitle

\begin{abstract}
Beyond-diagonal reconfigurable intelligent surfaces (BD-RIS) can realize interconnections among reflecting elements through an impedance network, thereby providing a new approach to improving the performance of integrated sensing and communication (ISAC) systems.
This paper investigates the optimization problem in BD-RIS-aided multiuser ISAC system, aiming to enable a flexible trade-off design between communication and sensing performance.
Specifically, we propose an optimization framework that jointly integrates multiuser interference management and sensing beam gain approximation. By jointly optimizing the precoding vector and the RIS phase-shift matrix, the proposed framework improves the multiuser communication sum rate through the interference management method and enhances the sensing performance through the beam gain approximation method.
For the resulting non-convex weighted optimization problem, we employ an alternating optimization (AO) algorithm to decouple it into two subproblems, namely precoding vector optimization and phase-shift matrix optimization, where each step admits a closed-form solution.
Simulation results demonstrate that the proposed BD-RIS-aided ISAC system can achieve significant improvement in the trade-offs between communication and sensing performance than the traditional diagonal RIS, verifying the effectiveness of the proposed optimization framework.
\end{abstract}

\begin{IEEEkeywords}
Beyond-diagonal reconfigurable intelligent surfaces (BD-RIS), integrated sensing and communication (ISAC), phase-shift optimization, closed-form solutions.
\end{IEEEkeywords}

\section{Introduction}

The development of sixth-generation (6G) mobile communication networks has given rise to emerging applications such as autonomous vehicles and smart cities, which pose higher requirements for real-time environmental sensing~\cite{liu2022survey}. 
To meet this demand, wireless networks need to achieve effective sensing of the surrounding environment while maintaining high communication efficiency. Integrated sensing and communication (ISAC) technology, by sharing spectrum and hardware resources, improves spectrum and energy efficiency while effectively reducing system deployment costs, and is thus regarded as one of the most promising key technologies in 6G wireless networks~\cite{9737357,10772590,11063223}.

The wireless propagation environment has a significant impact on the performance of ISAC systems. Severe occlusion in scenarios such as dense urban areas often leads to the absence of line-of-sight paths, which causes degradation in communication quality and limited sensing accuracy~\cite{10579074}. Reconfigurable intelligent surface (RIS), as an emerging low-cost technology, provides a feasible solution to mitigate this problem. By dynamically adjusting the phase-shifts of reflecting elements, RIS can reconfigure wireless channels, thereby enhancing communication or sensing capabilities~\cite{10989512,11139112,10623806,10057422}. Based on this advantage, RIS-aided ISAC systems have received extensive attention from the academic community in recent years. 
In \cite{10858764}, Wang {\it et al.} jointly optimized hybrid beamforming and RIS phase-shift matrices to improve system sensing performance. 
In \cite{10143420}, Hua {\it et al.} addressed physical layer security issues by collaboratively designing active and passive beamforming to effectively suppress information leakage while ensuring communication quality. 
In \cite{10901330}, Liu {\it et al.} discussed device-based sensing scenarios and investigated the uplink signal transmission mechanism of active targets aided by RIS.

However, traditional RIS employs the diagonal phase-shift matrix structure, where reflecting elements are independent of each other, which limits the beamforming gain in massive multiple input multiple output (MIMO) systems~\cite{10316535}. 
To address this limitation, the beyond diagonal reconfigurable intelligent surface (BD-RIS) architecture has been proposed. Unlike conventional architectures, BD-RIS allows reflecting elements to be interconnected through internal impedance networks, thereby giving rise to two novel architectures: fully-connected BD-RIS (FBD-RIS) and group-connected BD-RIS (GBD-RIS)~\cite{10453384,10197228}.
In the fully-connected architecture, all reflecting elements are mutually interconnected, with their reflection characteristics characterized by a full-rank matrix. The group-connected architecture partitions the reflecting elements into multiple independent subgroups, where elements within each subgroup remain fully connected, thus forming a block-diagonal reflection matrix. 
The FBD-RIS can achieve higher gains, but correspondingly entails greater circuit implementation complexity. In contrast, the GBD-RIS offers performance that lies between conventional diagonal RIS (D-RIS) and FBD-RIS, while maintaining relatively moderate complexity~\cite{10766364}.
Currently, research has begun to explore BD-RIS-aided ISAC systems. In \cite{10493847}, Guang {\it et al.} studied the minimization of base station (BS) transmit power under the conditions of meeting communication and sensing quality constraints. In \cite{10495009}, Liu {\it et al.} maximized system throughput under the premise of ensuring sensing performance.

Different from existing work, this paper proposes a novel multiuser interference management and beam gain approximation method for BD-RIS-aided ISAC systems, aiming to achieve the flexible design of trade-offs between communication and sensing performance. Specifically, the proposed interference management method can improve multiuser communication quality, while the beam gain approximation method can enhance sensing beam gain. 
The advantage of this method lies in transforming the originally highly non-convex objective function into a convex function form, avoiding non-convex quartic sensing constraints or communication signal-to-interference-plus-noise ratio (SINR) constraints, thereby significantly reducing the computational complexity of the optimization algorithm. 
For the non-convex weighted optimization problem formed thereby, we employ an alternating optimization (AO) algorithm to decompose it into two sub-problems of precoding vector and phase-shift matrix optimization for alternating solution. 
These subproblems are solved iteratively, with each iteration admitting closed-form solution, which further enhances computational efficiency.
Simulation results show that compared with traditional D-RIS-aided schemes, BD-RIS-aided ISAC systems exhibit obvious advantages in terms of the trade-offs between communication and sensing performance.

{\it{Notation:}} 
Scalars are denoted by italic letters, while vectors and matrices are represented by bold-face lowercase and uppercase letters, respectively. $ \|\cdot\| $ represents the Euclidean norm of a vector, and $ |\cdot| $ denotes the absolute value of a scalar. $ \bm{x} \sim \mathcal{CN}(\bm{0}, \mathbf{X}) $ indicates that $ \bm{x} $ follows a circularly symmetric complex Gaussian (CSCG) distribution with mean vector $ \bm{0} $ and covariance matrix $ \mathbf{X} $. 
$ (\cdot)^{\rm H} $, $ (\cdot)^{\rm T} $, $ (\cdot)^{*} $, and $ (\cdot)^{-1} $ refer to the conjugate transpose, transpose, and conjugate operation, matrix inverse operation, respectively.
$\otimes$ and $\odot$ denote the Kronecker product and the Hadamard product, respectively. $\mathrm{Re}\{\cdot\}$ and $\mathrm{Im}\{\cdot\}$ denote the real and imaginary parts, respectively.
$\mathbf{I}_{N \times N}$ denotes the $N\times N$ dimensional matrix, and $ \mathbb{C}^{M \times N} $ denotes the $ M \times N $ dimensional space of complex matrices. 
$ \text{Re}(\cdot) $ represents the real part of a complex number, and $ \mathbb{E}[x] $ denotes the statistical expectation (or mean) of the random variable $ x $.
$ \mathrm{diag}(\bm{x}) $ refers to the diagonal matrix formed from the components of the vector $ \bm{x} $, and $ x_n $ denotes the $ n $-th element of the vector $ \bm{x} $. $ j $ is the imaginary unit.

\section{System Model}
\label{s2}
We consider a BD-RIS-aided ISAC system, as illustrated in Fig.~\ref{model}. In the above system, an $M$-antenna BS simultaneously serves $K$ single-antenna users indexed by $\mathcal{K} =
\{1,2,...,K\}$ while sensing a point-like target by BD-RIS equipped with $N=N_1 \times N_2$ reflecting elements.

\begin{figure}[!ht]
	\centering
	\includegraphics[width=0.45\textwidth]{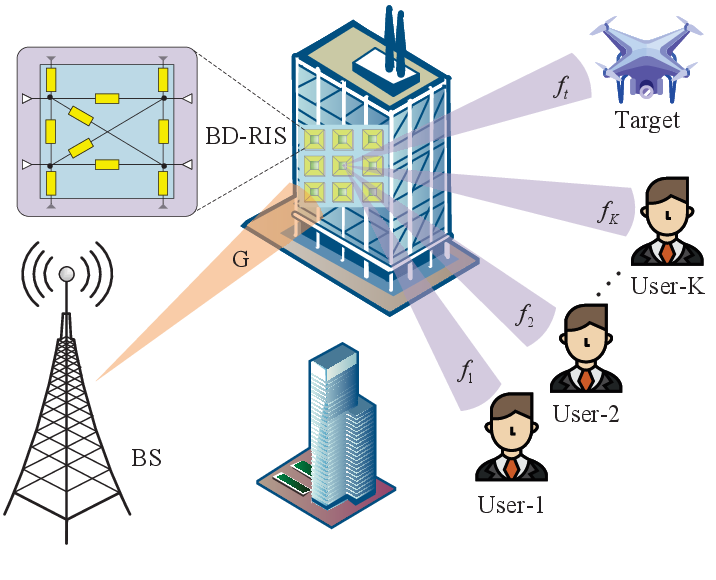}
	\caption{The model of a BD-RIS-aided ISAC system.}
	\label{model}
\end{figure}

\subsection{Channel Model}
In the above ISAC system, we denote $\mathbf{G}\in\mathbb{C}^{N\times M}$, $\bm f_K\in\mathbb{C}^{N\times K}$ and $\bm f_t\in\mathbb{C}^{N\times 1}$ as the channels from BS to BD-RIS, from the RIS to user-$k$ and from the RIS to target, respectively. We assume that the direct links from the BS to users and target are blocked due to severe path loss or physical obstructions. 
The channel $\bm f_k$ is modeled as the Rayleigh channel model, expressed as
\begin{equation} 
	\begin{split}
		\bm{f}_k = \sqrt{\beta_k} \mathbf{R}^{1/2}_{\text{RU}} \bm{z}_k,
	\end{split}
\end{equation}

\noindent where $\beta_k$ denotes the path loss, $\mathbf{R}_{\text{RU}} \in \mathbb{C}^{N \times N}$ represents the spatial correlation matrix, and $\bm{z}_k \sim \mathcal{CN}(\mathbf{0}, \mathbf{I}_N)$ describes the corresponding fast-fading vectors,
\begin{equation} 
	\begin{split}
		[\mathbf{R}_{\text{RU}}]_{i,j} = \text{sinc}\left( \frac{2\|\bm{u}_i - \bm{u}_j\|_2}{\lambda} \right),
	\end{split}
\end{equation}
\noindent where $\bm{u}_i = [0, \text{mod}(i-1, N_1)\lambda/2, \lfloor(i-1)/N_1\rfloor\lambda/2]$ denotes the position of the $i$-th RIS element, $\lambda$ is the wavelength, and $i,j \in \mathcal{N} \triangleq \left\{1,2,\ldots, N\right\}$~\cite{9300189,10606173}.

For the sensing link, we model the channel $\bm f_t$ as the line-of-sight (LOS) steering vector, which is given by
\begin{equation} 
	\begin{split}
		\bm{f}_t = &\left[1, e^{-j\pi\sin(\theta^t)\sin(\varphi^t)}, \ldots, e^{-j\pi(N_1-1)\sin(\theta^t)\sin(\varphi^t)}\right]^{\mathrm{T}} \\ 
		&\otimes \left[1, e^{-j\pi\cos(\theta^t)}, \ldots, e^{-j\pi(N_2-1)\cos(\theta^t)}\right]^{\mathrm{T}} / \sqrt{N},
	\end{split}
\end{equation}
\noindent where $\theta^t$ and $\varphi^t$ denote the elevation and azimuth angles of the target, respectively.

Furthermore, due to its fully-connected impedance networking architecture, the BD-RIS phase-shift matrix $\mathbf{\Theta} \in \mathbb{C}^{N\times N}$ should satisfy 
$\mathbf{\Theta}^\mathrm{H} \mathbf{\Theta} = \mathbf{I}_N$ and $\mathbf{\Theta}^\mathrm{T} = \mathbf{\Theta}$~\cite{9913356}.

\subsection{Communication and Sensing Model}
In the considered ISAC system, the BS transmits a dual-functional signal that simultaneously can be used for communication with users and sensing of the target. 
Accordingly, the transmitted signal from the BS can be formulated as 
\begin{equation}
	\bm x = \mathbf{W} \bm s = \sum_{k=1}^K{\bm w_k s_k},
\end{equation}

\noindent where $\mathbf{W} = [\bm{w}_1, \bm{w}_2, \ldots, \bm{w}_K] \in \mathbb{C}^{M \times K}$ represents the beamforming matrix, with $\bm{w}_k \in \mathbb{C}^{M \times 1}$ denoting the beamforming vector for user $k$.
$\bm s=[s_1,\dots,s_K]^\mathrm{T}$ contains the data symbols intended for all $K$ users, where the information symbols satisfy the normalization constraint $\mathbb{E} \left[\bm s \bm s^\mathrm{H}\right] = \mathbf{I}$.

Therefore, the received signal at user-$k$ can be expressed as
\begin{equation}
	\begin{split}\label{y_k}
		y_k = \underbrace{\bm{f}_k^\mathrm{H} \mathbf{\Theta} \mathbf{G} \bm{w}_k s_k}_{\text{desired signal}} + \underbrace{\sum_{i=1, i \neq k}^{K} \bm{f}_k^\mathrm{H} \mathbf{\Theta} \mathbf{G} \bm{w}_i s_i}_{\text{multiuser interference}} + \underbrace{n_k}_{\text{noise}},
	\end{split}
\end{equation}

\noindent where $ n_k \sim \mathcal{C} \mathcal{N}\left(0,\sigma_k^2\right)$ denotes the additive white Gaussian noise (AWGN).

From \eqref{y_k}, it can be observed that the successful decoding of user-$k$ depends on two key factors: enhancing the desired signal component and suppressing the interference from other users. 
To facilitate the joint optimization of beamforming and RIS phase shifts for enhanced communication performance, we introduce the multiuser beam gain matrix defined as
\begin{equation} 
	\mathbf{F} = (\mathbf{HW}) \odot (\mathbf{HW})^*, 
\end{equation}
\noindent where 
$\mathbf{H} = \begin{bmatrix}
	\bm{f}_1^\mathrm{H} \mathbf{\Theta} \mathbf{G}, \bm{f}_2^\mathrm{H} \mathbf{\Theta} \mathbf{G}, \ldots, \bm{f}_K^\mathrm{H} \mathbf{\Theta} \mathbf{G}
\end{bmatrix}^\mathrm{T} \in \mathbb{C}^{K \times M}$. 
Ideally, to suppress inter-user interference and enhance the SINR, the beam gain matrix $\mathbf{F}$ should approximate diagonal structure. This observation motivates the design criterion of minimizing the difference between $\mathbf{F}$ and the diagonal matrix $\mathbf{P}$, i.e., $\min_{\mathbf{W}, \mathbf{\Theta}} \|\mathbf{F} - \mathbf{P}\|_F^2$, which effectively enhance the communication performance.

Furthermore, to effectively sense the point-like target in the interested direction, sufficient beam gain must be achieved toward the sensing direction~\cite{8386661}. 
The sensing beam gain in the target direction can be characterized as
\begin{equation}
	F_s = \left|\bm{f}_t^\mathrm{H}  \mathbf{\Theta} \mathbf{G} \mathbf{W}\right|^2 = \left|\bm{f}_t^\mathrm{H}  \mathbf{\Theta} \mathbf{G} \sum_{k=1}^{K}{\bm{w}_k}\right|^2.
\end{equation}
Therefore, the objective can be accomplished by minimizing the difference between the sensing beam gain and the desired beam pattern, i.e., 
$\min_{\mathbf{W}, \mathbf{\Theta}} |F_s - P_s|^2,$
where $P_s \!\geq\! 0$ is a given desired gain.

\section{Beamforming Optimization}

In this section, we consider the design of trade-offs between communication and sensing. Therefore, we formulate the following weighted optimization problem
\begin{subequations}\label{weighted_ISAC}
	\begin{align}
		\min_{\mathbf{W}, \mathbf{\Theta}} \quad & f(\mathbf{W}, \mathbf{\Theta}) = \eta \|\mathbf{F} - \mathbf{P}\|_F^2 + (1-\eta) |{F}_s - {P}_s|^2	\label{eq:objective_main}\\
		\text{s.t.} \quad & \mathcal{C}_{\text{BS}}: \sum_{k=1}^K \|\bm{w}_k\|^2_2 \leq P_{\max} \label{eq:constraint_power},\\
		& \mathcal{C}_{\mathbf{\Theta}}: 
		\mathbf{\Theta}^\mathrm{H} \mathbf{\Theta} = \mathbf{I}_N,
		\mathbf{\Theta}^\mathrm{T} = \mathbf{\Theta},	\label{theta_constraint}		
	\end{align}
\end{subequations}
\noindent where $\eta \in [0,1]$ is a given weighted factor. $P_{\max}$ represents the maximum transmit power at the BS.

Problem \eqref{weighted_ISAC} is challenging to solve directly due to the non-convexity of the objective function and constraint \eqref{theta_constraint}. In particular, the variables $\mathbf{W}$ and $\mathbf{\Theta}$ appear in the objective as fourth-order terms $\|\mathbf{W}\|_F^4$ and $\|\mathbf{\Theta}\|_F^4$, which further complicates the optimization.

Therefore, we introduce an equivalent transformation of problem \eqref{weighted_ISAC} as follows
\begin{subequations}\label{weighted_ISAC_2}
	\begin{align}
		\min_{\bm{w}, \mathbf{\Theta}, \bm \theta, \bm \phi} \quad & f(\bm{w}, \mathbf{\Theta},\bm \theta, \bm \phi) = \eta\sum_{k=1}^K \sum_{i=1}^K \left| \bm{f}_i^\mathrm{H} \mathbf{\Theta} \mathbf{G} \bm{w}_k - P_{i,k}e^{j\theta_{i,k}} \right|^2 \notag\\
		& \quad + (1-\eta)\sum_{k=1}^K \left| \bm{f}_t^\mathrm{H} \mathbf{\Theta} \mathbf{G} \bm{w}_k - P_t e^{j\phi_k} \right|^2\\
		\text{s.t.} \quad & \mathcal{C}_{\text{BS}}: \sum_{k=1}^K \|\bm{w}_k\|^2_2 \leq P_{\max} ,\\
		& \mathcal{C}_{\mathbf{\Theta}}: 
		\mathbf{\Theta}^\mathrm{H} \mathbf{\Theta} = \mathbf{I}_N,
		\mathbf{\Theta}^\mathrm{T} = \mathbf{\Theta},			
	\end{align}
\end{subequations}

\noindent where $\bm \theta \!=\! [\theta_{1,1},\dots,\theta_{i,k},\dots,\theta_{K,K}]^\mathrm{T}, \bm \phi \!=\! [\phi_{1},\phi_{2},\dots,\phi_{K}]^\mathrm{T}$, $P_{i,k} = C$ for $i = k$ and $P_{i,k} = 0$ for $i \neq k$.

Problem \eqref{weighted_ISAC_2} is very tricky due to the coupled variables and its non-convex constraint.  To tackle these difficulties, we develop an efficient alternating optimization (AO) algorithm.

\subsection{Beamforming Optimization}
When $\mathbf{\Theta}, \bm \theta$ and $\bm \phi$ are fixed, the optimization problem about $\bm w$ can be reduced to the following problem
\begin{equation}
	\min_{\{\bm{w}_k\}_{k=1}^K} \ f(\bm w) \quad \text{s.t.} \ \mathcal{C}_{\text{BS}}.
	\label{subproblem1}
\end{equation}

Then, problem \eqref{subproblem1} can be equivalently reformulated as
\begin{subequations}
	\begin{align}\label{subproblem1_2}
	\min_{\{\bm{w}_k\}_{k=1}^K} \quad &f(\bm w) = \sum_{k=1}^K \left( \bm{w}_k^\mathrm{H} \mathbf{A} \bm{w}_k - 2\text{Re}\{\bm{b}_k^\mathrm{H} \bm{w}_k\}\right)\\
	\text{s.t.} \quad & \mathcal{C}_{\text{BS}}: \sum_{k=1}^K \|\bm{w}_k\|^2_2 \leq P_{\max},
	\end{align}
\end{subequations}

\noindent where 
\begin{equation*}
	\mathbf{A} = \eta\sum_{i=1}^K \mathbf{G}^\mathrm{H} \mathbf{\Theta}^\mathrm{H} \bm{f}_i \bm{f}_i^\mathrm{H} \mathbf{\Theta} \mathbf{G} + (1-\eta) \mathbf{G}^\mathrm{H} \mathbf{\Theta}^\mathrm{H} \bm{f}_t \bm{f}_t^\mathrm{H} \mathbf{\Theta} \mathbf{G},
\end{equation*}
\begin{equation*}
	\bm{b}_k = \eta\sum_{i=1}^K P_{i,k}e^{j\theta_{i,k}} \mathbf{G}^\mathrm{H} \mathbf{\Theta}^\mathrm{H} \bm{f}_i + (1-\eta) P_t e^{j\phi_k} \mathbf{G}^\mathrm{H} \mathbf{\Theta}^\mathrm{H} \bm{f}_t.
\end{equation*}

In order to solve the above problem, we derive the semi-closed form solution of the problem \eqref{subproblem1_2} by the Lagrangian method as
\begin{equation}
	\bm{w}_k^\star = (\mathbf{A} + \lambda_k \mathbf{I})^{-1} \bm{b}_k, \quad k \in \mathcal{K},
\end{equation}
where the Lagrange multiplier $\lambda_k \geq 0$ must satisfy the complementary slackness condition $\lambda_k(\|\bm{w}_k\|^2_2 - P_{\max}) = 0$, which can be efficiently obtained via bisection search.

\subsection{Phase-Shift Matrix Optimization}
In this subsection, we fix $\bm{w}$, $\bm{\theta}$ and $\bm{\phi}$, and optimize the RIS phase-shift matrix $\mathbf{\Theta}$. 
Due to the complex matrix structure of $\mathbf{\Theta}$, directly solving this problem is intractable. 
Therefore, we introduce an auxiliary variable $\bm{\psi} = \text{vec}(\mathbf{\Theta})$ to decouple $\mathbf{\Theta}$ from the complicated constraints. 
Then, the optimization problem can be reformulated as scaled-form augmented Lagrangian
\begin{equation}\label{subproblem2}
	\min_{\mathbf{\Theta}} \ f(\mathbf{\Theta}) +  \mu\left\|\bm{\psi} - \text{vec}(\mathbf{\Theta}) + \bm{\nu}\right\|^2_2 \quad \text{s.t.} \  \mathcal{C}_{\mathbf{\Theta}},
\end{equation}

\noindent where $\mu > 0$ and $\bm{\nu} \in \mathbb{C}^{N^2 \times 1}$ denote the given parameter and the dual variable, respectively. To solve the problem \eqref{subproblem2}, we employ the following update rules.
\begin{subequations}
	\begin{align}
		\bm{\psi}^{l+1} =& \ \arg\min_{\bm{\psi}} \ f(\bm{\psi}) + \mu\left\|\bm{\psi} - \text{vec}(\mathbf{\Theta}^{l}) + \bm{\nu}^{l}\right\|^2_2,\label{subproblem_1}\\
		\mathbf{\Theta}^{l+1} =& \ \arg\min_{\mathbf{\Theta}} \left\|\text{vec}(\mathbf{\Theta}) - \bm{\psi}^{l+1} - \bm{\nu}^{l}\right\|^2_2, \notag\\
		& \text{s.t. } \mathbf{\Theta}^\mathrm{H} \mathbf{\Theta} = \mathbf{I}_N, \mathbf{\Theta}^\mathrm{T} = \mathbf{\Theta},\label{subproblem_2}\\
		\bm{\nu}^{l+1} =& \ \bm{\nu}^{l} + \bm{\psi}^{l+1} - \text{vec}(\mathbf{\Theta}^{l+1}).
	\end{align}
\end{subequations}

For subproblem \eqref{subproblem_1}, it can be equivalently formulated as
\begin{equation}
	\min_{\bm{\psi}} \ \bm{\psi}^\mathrm{H} (\mathbf{P}+\mathbf{Q}) \bm{\psi} - 2\text{Re}\{(\bm{p}+\bm{q})^\mathrm{H} \bm{\psi}\} + \mu\left\|\bm{\psi} - \text{vec}(\mathbf{\Theta}) + \bm{\nu}\right\|^2_2,\label{subproblem_1_1}
\end{equation}

\noindent where
\begin{subequations}
	\begin{align*}
	\mathbf{P} &= \eta\sum_{i=1}^K \sum_{k=1}^K (\bm{w}_k^* \mathbf{G}^* \otimes \bm{f}_i)(\bm{w}_k^\mathrm{T} \mathbf{G}^\mathrm{T} \otimes \bm{f}_i^\mathrm{H}) ,\\
	\mathbf{Q} &= (1-\eta)\sum_{k=1}^K (\bm{w}_k^* \mathbf{G}^* \otimes \bm{f}_t)(\bm{w}_k^\mathrm{T} \mathbf{G}^\mathrm{T} \otimes \bm{f}_t^\mathrm{H}),\\
	\bm{p} &= \eta\sum_{i=1}^K \sum_{k=1}^K P_{i,k}e^{j\theta_{i,k}} (\bm{w}_k^* \mathbf{G}^* \otimes \bm{f}_i) ,\\
	\bm{q} &= (1-\eta)\sum_{k=1}^K P_t e^{j\phi_k} (\bm{w}_k^* \mathbf{G}^* \otimes \bm{f}_t).
	\end{align*}
\end{subequations}

Then, the closed-form solution of the problem \eqref{subproblem_1_1} can be derived as
\begin{equation}
	\bm{\psi}^\star = (\mathbf{P} + \mathbf{Q} + \mu \mathbf{I})^{-1} \left(\bm{p} + \bm{q} + \mu (\text{vec}(\mathbf{\Theta}) - \bm{\nu})\right).
\end{equation}

Furthermore, subproblem \eqref{subproblem_2} can be equivalently reformulated as the following projection problem
\begin{equation}
	\min_{\boldsymbol{\Theta}} \|\boldsymbol{\Theta} - \hat{\boldsymbol{\Theta}}\|_2^2 
	\quad \text{s.t.} \ \boldsymbol{\Theta}^\mathrm{H}\boldsymbol{\Theta} = \mathbf{I},  \boldsymbol{\Theta}^\mathrm{T} = \boldsymbol{\Theta},
	\label{12}
\end{equation}

\noindent where $\hat{\boldsymbol{\Theta}} = \text{mat}(\boldsymbol{\theta}^{m+1} + \boldsymbol{\nu}^m)$ denotes the matricization of vector $\boldsymbol{\theta}^{m+1} + \boldsymbol{\nu}^m$ into an $N \times N$ matrix.

For the optimization problem \eqref{12}, a closed-form solution can be derived through symmetric unitary projection.
\begin{equation}
	\boldsymbol{\Theta}^\star = \text{symuni}(\hat{\boldsymbol{\Theta}}) = \tilde{\mathbf{U}}\mathbf{V}^\mathrm{H},
\end{equation}
where $\tilde{\mathbf{U}} = [\mathbf{U}_{:,1:R}, \mathbf{V}^*_{:,R+1:N}]$, $R = \text{rank}(\text{sym}(\hat{\boldsymbol{\Theta}}))$.
$\mathbf{U}$ and $\mathbf{V}$ are the unitary matrices obtained from the singular value decomposition (SVD) of $\text{sym}(\hat{\boldsymbol{\Theta}}) = (\hat{\boldsymbol{\Theta}} + \hat{\boldsymbol{\Theta}}^\mathrm{T})/2$~\cite{10319662,10742100}.

\subsection{Desired Beam Phase Optimization}
When $\bm w$ and $\mathbf{\Theta}$ are fixed, the optimization problem about $\bm \theta$ and $\bm \phi$ can be reduced to the following problems
\begin{equation}
	\min_{\bm \theta, \bm \phi} \ f(\bm \theta, \bm \phi) \quad \text{s.t.} \ \theta_{i,k}\in [0, 2\pi), \phi_{k}\in [0, 2\pi).
\end{equation}

The variables $\bm\theta$ and $\bm \phi$ are determined by solving the following subproblems in parallel
\begin{subequations}\label{theta}
	\begin{align}
	&\min_{\theta_{i,k}} \ \left| \bm{f}_i^\mathrm{H} \mathbf{\Theta} \mathbf{G} \bm{w}_k - P_{i,k}e^{j\theta_{i,k}} \right|^2 \quad \text{s.t.} \ \theta_{i,k} \in [0, 2\pi),\\
	&\min_{\phi_{k}} \ \left| \bm{f}_t^\mathrm{H} \mathbf{\Theta} \mathbf{G} \bm{w}_k - P_{t}e^{j\phi_{k}} \right|^2 \quad \text{s.t.} \ \phi_{k} \in [0, 2\pi).
	\end{align}
\end{subequations}

To tackle the subproblems \eqref{theta}, we formulate the following proposition along with its corresponding solution approach.

\begin{proposition}\label{prop:phase_opt}
	Consider the following optimization problem
	\begin{equation}\label{eq:phase_opt_problem}
		\min_{\xi \in [0, 2\pi)} \left|e^{j\xi} - \zeta\right|^2,
	\end{equation}
	where $\zeta$ is a given value. The globally optimal solution of problem \eqref{eq:phase_opt_problem} can be derived as
	\begin{equation}\label{eq:phase_opt_solution}
		\xi^\star = \text{mod}\left(\angle(\zeta), 2\pi\right).
	\end{equation}
\end{proposition}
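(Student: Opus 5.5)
The plan is to expand the squared modulus and reduce the problem to maximizing a cosine. Writing $\zeta = |\zeta| e^{j\angle(\zeta)}$, we have
\begin{equation*}
\left|e^{j\xi} - \zeta\right|^2 = 1 + |\zeta|^2 - 2\,\mathrm{Re}\{\zeta^{*} e^{j\xi}\} = 1 + |\zeta|^2 - 2|\zeta|\cos\left(\xi - \angle(\zeta)\right).
\end{equation*}
The terms $1+|\zeta|^2$ do not depend on $\xi$. Hence, for $\zeta \neq 0$, minimizing the objective in \eqref{eq:phase_opt_problem} is equivalent to maximizing $\cos(\xi - \angle(\zeta))$ over $\xi\in[0,2\pi)$.

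Since $\cos(\cdot)\le 1$, with equality exactly when the argument is an integer multiple of $2\pi$, the maximizers are the $\xi$ with $\xi \equiv \angle(\zeta) \pmod{2\pi}$. Within $[0,2\pi)$ there is exactly one such point, namely $\xi^\star=\mathrm{mod}(\angle(\zeta),2\pi)$. This holds regardless of whether $\angle(\cdot)$ is taken to return values in $(-\pi,\pi]$ or some other branch, which is why the $\mathrm{mod}$ appears in \eqref{eq:phase_opt_solution}. The attained minimum is $(1-|\zeta|)^2$. Because the lower bound is achieved, this point is globally optimal, not merely stationary.

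The only delicate step is the degenerate case $\zeta=0$. There the objective is identically $1$, so every $\xi$ is optimal, and \eqref{eq:phase_opt_solution} remains valid under any convention for $\angle(0)$. I will state this separately.

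Finally, I will connect the proposition to the subproblems \eqref{theta}. For $i=k$, factor out $P_{i,k}=C>0$; the objective becomes $C^2\,|e^{j\theta_{k,k}} - \bm f_k^{\mathrm H}\mathbf\Theta\mathbf G\bm w_k/C|^2$, so the proposition applies with $\zeta=\bm f_k^{\mathrm H}\mathbf\Theta\mathbf G\bm w_k/C$. Since a positive scaling does not change the angle, $\theta_{k,k}^\star=\mathrm{mod}(\angle(\bm f_k^{\mathrm H}\mathbf\Theta\mathbf G\bm w_k),2\pi)$. The same argument applies to $\phi_k$ with $P_t>0$. For $i\neq k$, we have $P_{i,k}=0$, so the objective does not depend on $\theta_{i,k}$ and the choice is arbitrary.
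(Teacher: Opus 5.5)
Your proposal is correct and follows essentially the same route as the paper: expand the squared modulus, drop the $\xi$-independent terms, and reduce to maximizing $\cos(\xi-\angle(\zeta))$. The paper expands via real and imaginary parts where you use the complex form; your separate handling of $\zeta=0$ (where the paper's passage to $\cos(\angle(\zeta)-\xi)$ implicitly divides by $|\zeta|$) and your rescaling by $C>0$ and $P_t>0$ to apply the result to the subproblems are small but welcome additions.
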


\begin{proof}[\bfseries Proof]
	The problem \eqref{eq:phase_opt_problem} can be written as
	\begin{equation}
		\min_{\xi \in [0, 2\pi)} \left(\sin(\xi) - \text{Im}\{\zeta\}\right)^2 + \left(\cos(\xi) - \text{Re}\{\zeta\}\right)^2.
	\end{equation}
		
	Furthermore, the problem \eqref{eq:phase_opt_problem} can be equivalently reformulated as
	\begin{equation}	
		\max_{\xi \in [0, 2\pi)} \text{Re}\{\zeta\}\cos(\xi) + \text{Im}\{\zeta\}\sin(\xi) .
	\end{equation}

	Due to $\text{Re}\{\zeta\} = |\zeta|\cos(\angle(\zeta))$, $\text{Im}\{\zeta\} = |\zeta|\sin(\angle(\zeta))$, and $\cos(\angle(\zeta))\cos(\xi) + \sin(\angle(\zeta))\sin(\xi) = \cos(\angle(\zeta) - \xi)$, we can obtain 
	\begin{equation}
		\max_{\xi \in [0, 2\pi)} \cos(\angle(\zeta) - \xi).
	\end{equation}
	
	Therefore, the globally optimal solution of the problem \eqref{eq:phase_opt_problem} can be derived as $\xi^\star = \bmod(\angle(\zeta), 2\pi)$.

	The proof is completed.
\end{proof}

According to \textit{\textbf{Proposition 1}}, the optimal solution of problem \eqref{theta} is given by
\begin{equation}
	\theta_{i,k}^\star = \angle\left(\bm{f}_i^\mathrm{H} \mathbf{\Theta} \mathbf{G} \bm{w}_k\right), \ \phi_k^\star = \angle\left(\bm{f}_t^\mathrm{H} \mathbf{\Theta} \mathbf{G} \bm{w}_k\right).
	\label{eq:optimal_solution}
\end{equation}

It is worth noting that the proposed algorithm exhibits excellent extensibility. By simply applying the projection operation in \eqref{12} to each diagonal block of the phase-shift matrices, i.e., $\text{diag}(\text{symuni}(\boldsymbol{\Theta}_1), \text{symuni}(\boldsymbol{\Theta}_2), \ldots, \text{symuni}(\boldsymbol{\Theta}_G))$, it can be directly applied to the proposed the design of trade-offs in the GBD-RIS-aided ISAC system.

\section{simulation results}
\label{s4}
In this section, simulations results are conducted to validate the performance of the proposed algorithm. We consider a typical ISAC scenario where the BS and the BD-RIS are located at $(-20\text{m}, 0\text{m}, 25\text{m})$ and $(0\text{m}, 0\text{m}, 0\text{m})$, respectively. The key system parameters are configured as $K = 5$, $M = 8$, $N_1 = 8$, $N_2 = 4$, $\lambda = 0.03\text{m}$,   $P_{\max} = 30\text{dBm}$,  $\theta^t = 90^{\circ}$, $\phi^t = 45^{\circ}$, and $\sigma_k^2 = -100~\text{dBm}$, respectively. The path loss coefficient is modeled as $\beta_k = 10^{-3}d_k^{-2}$, where $d_k$ denotes the distance from the RIS to the $k$-th user. 
The small-scale fading is characterized by Rician channel model, while the large-scale 
fading for the BS-RIS channel is described by $\mathrm{PL} = 37.3+22.0\log\left(d\right)$, where $d$ denotes the distance between BS and RIS~\cite{10742100}. 




\begin{figure*}[!htbp]
	\centering
	\subfloat[$\eta = 0$]{
		\includegraphics[width=0.3\linewidth]{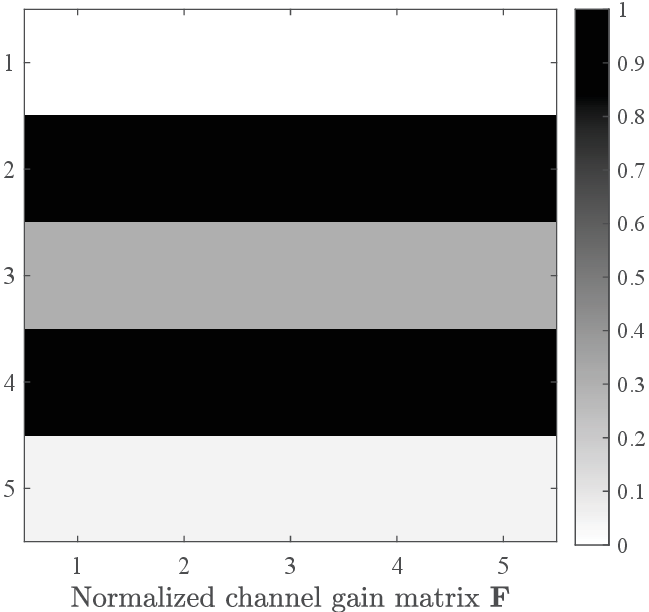}
		\label{HWI}}
	\hspace{0cm}
	\subfloat[$\eta = 0.6$]{
		\includegraphics[width=0.3\linewidth]{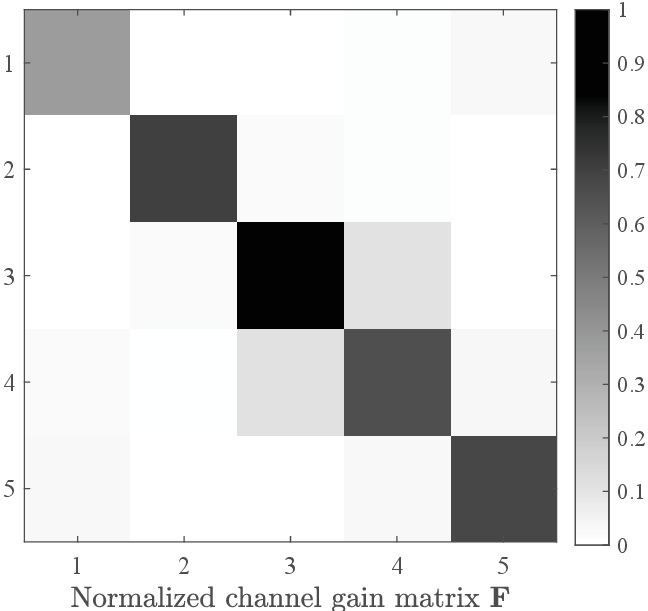}
		\label{SIC}}%
	\hspace{0cm}
	\subfloat[$\eta = 1$]{
		\includegraphics[width=0.3\linewidth]{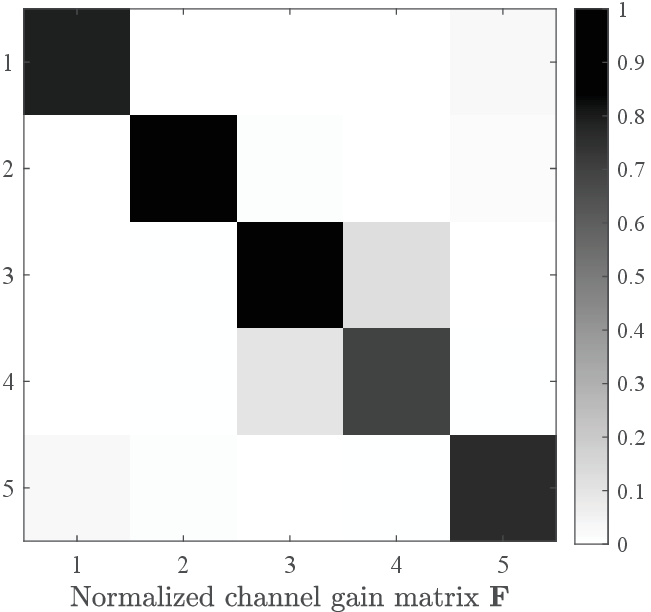}
		\label{RIS}}
	\hspace{0cm}
	\caption{User channel gain matrix $\mathbf{F}$ generated for different weighted values $\eta$,}
	\label{communication}
\end{figure*}

\begin{figure*}[!htbp]
	\centering
	\subfloat[$\eta = 0$]{
		\includegraphics[width=0.3\linewidth]{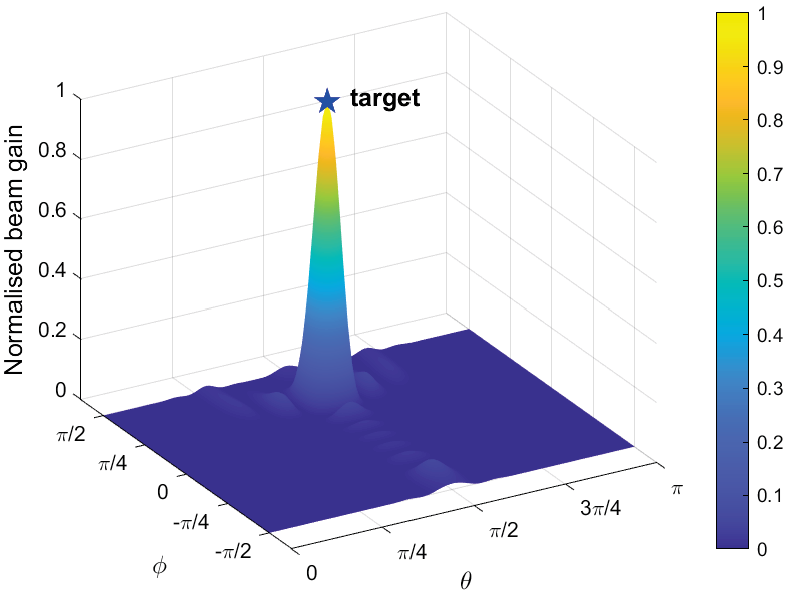}
		\label{HWI}}
	\hspace{0cm}
	\subfloat[$\eta = 0.6$]{
		\includegraphics[width=0.3\linewidth]{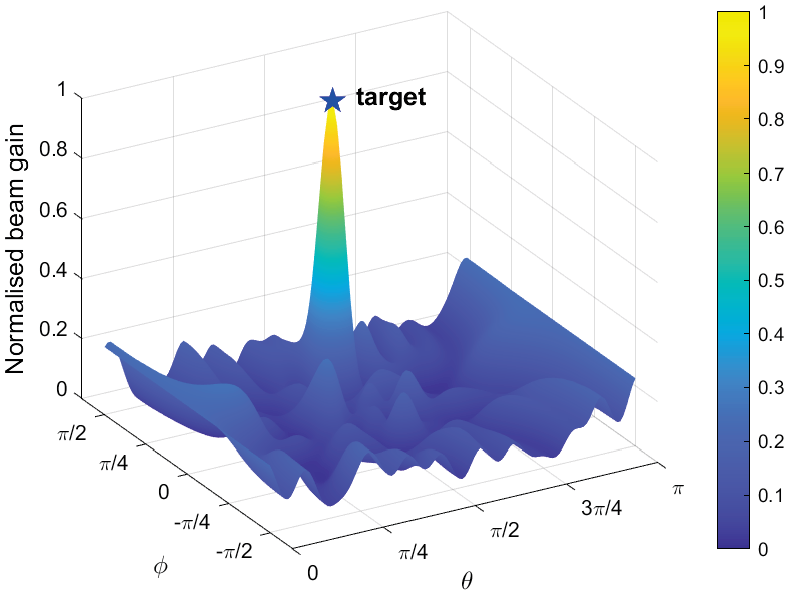}
		\label{SIC}}%
	\hspace{0cm}
	\subfloat[$\eta = 1$]{
		\includegraphics[width=0.3\linewidth]{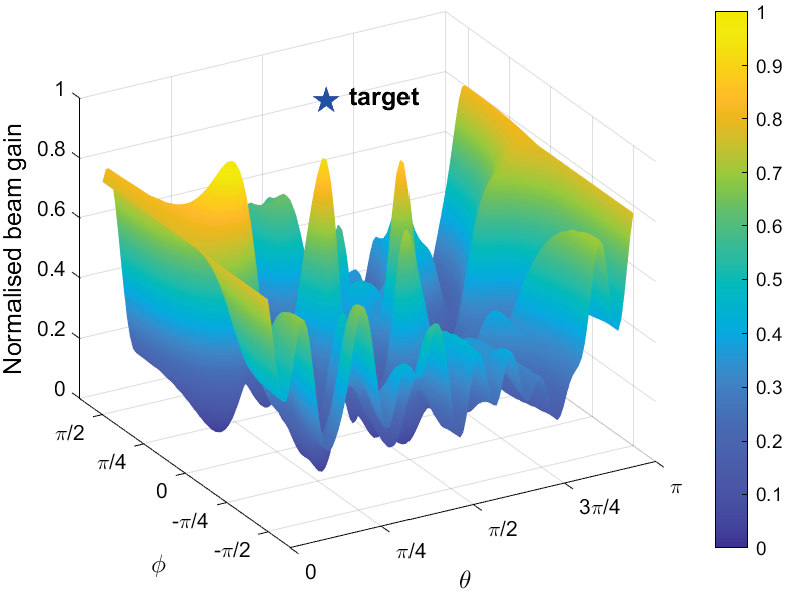}
		\label{RIS}}
	\hspace{0cm}
	\caption{Beam pattern generated for different weighted values $\eta$.}
	\label{sensing}
\end{figure*}

Fig.~\ref{communication} and Fig.~\ref{sensing} illustrate the channel gain matrix $\mathbf{F}$ and the corresponding beam patterns for different weights, respectively. As observed from Fig.~\ref{communication}, as the $\eta$ increases from 0 to 1, the system design gradually transitions from sensing-centered to communication-centered, and the channel gain matrix $\mathbf{F}$ exhibits increasingly pronounced diagonalization characteristics, indicating that BD-RIS can enhance the orthogonality among user channels through phase-shift optimization, thereby effectively suppressing multiuser interference and improving communication performance. Meanwhile, as shown in Fig.~\ref{sensing}, as $\eta$ decreases, the system design becomes more sensing-centered, and the beam gain is more concentrated toward the direction of interest, thereby achieving superior sensing performance. Therefore, in practical applications, BD-RIS-aided ISAC systems can flexibly adjust $\eta$ to adapt to different sensing and communication performance requirements under various scenarios, achieving higher-performance ISAC systems.

\begin{figure}[!t]
	\centering
	\includegraphics[width=0.5\textwidth]{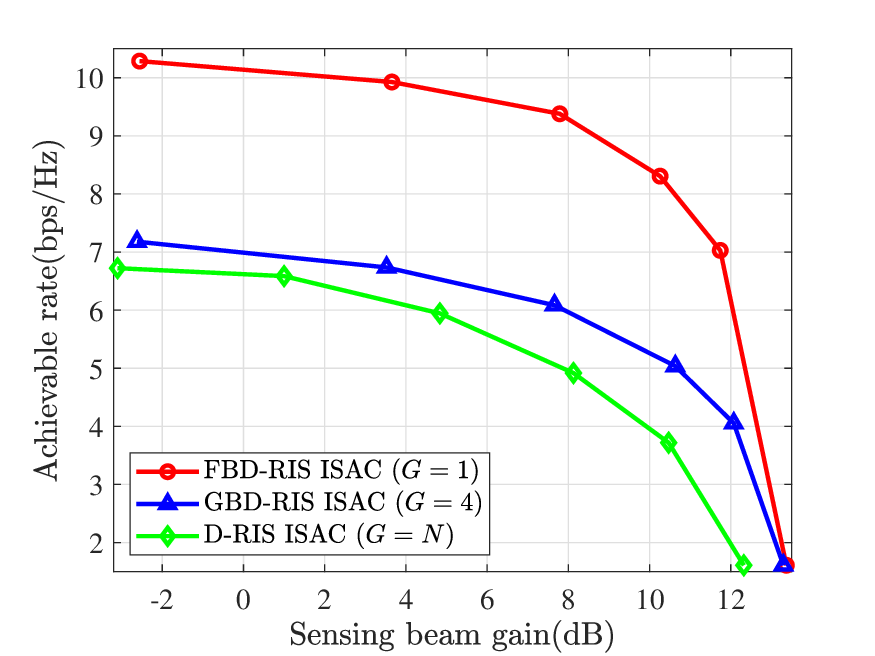}
	\caption{The trade-offs between the achievable rate and sensing beam gain.}
	\label{Trade_offs}
\end{figure}

Fig.~\ref{Trade_offs} illustrates the design of trade-offs between the achievable rate and sensing beam gain in ISAC systems aided by different RIS architectures. 
The simulation results demonstrate that all RIS architectures exhibit a monotonically decreasing trend in achievable rate as the sensing beam gain increases, which reveals the inherent trade-offs between communication and sensing performance in ISAC systems.
Specifically, when more power resources are allocated to enhance sensing performance, the power budget available for communication is correspondingly reduced, leading to a degradation in multiuser sum rate.
Moreover, the FBD-RIS-aided ISAC system consistently achieves superior system performance, significantly outperforming the GBD-RIS and conventional D-RIS-aided ISAC systems, and this performance advantage primarily stems from the fact that FBD-RIS can provide more spatial degrees of freedom (DoF). The simulation results substantiate that employing the BD-RIS can effectively enhance the overall performance of ISAC systems under the design of trade-offs between communication and sensing scenarios.

\section{conclusion}
\label{s5}
In this paper, we have investigated the optimization problem of BD-RIS-aided multiuser ISAC systems and have proposed an optimization framework jointly combining multi-user interference management method and sensing beam gain approximation. For the non-convex weighted optimization problem formulated in this paper, an alternating optimization algorithm has been employed to solve the base station precoding and BD-RIS phase-shift matrix, combined with the alternating direction method of multipliers to achieve efficient solution. Simulation results have demonstrated that compared with traditional diagonal RIS schemes, the BD-RIS-aided ISAC system achieves significant advantages in trade-offs between communication and sensing. Future work will explore robust design under imperfect channel information and extended applications in multi-target sensing scenarios.

\bibliographystyle{IEEEtran}
\bibliography{myref}

\end{document}